\newtheorem{thm}{Theorem}
\newtheorem{prop}[thm]{Proposition} 
\newtheorem{lemma}[thm]{Lemma}
\theoremstyle{definition}
\newtheorem{defi}{Definition}
\theoremstyle{remark}
\newcommand*{\set}[1]{\left\{#1\right\}}
\newcommand*{\x}{X}
\newcommand*{\y}{Y}
\newcommand*{\inswitchvcutknownP}{v_{k\rightarrow i}}
\title{Costlier switching strengthens competition even without advertising} 
\author{Sander Heinsalu\thanks{Research School of Economics, Australian National University, 25a Kingsley St, Acton ACT 2601, Australia.
Email: sander.heinsalu@anu.edu.au, 
website: \url{https://sanderheinsalu.com/}
No financial support was received for this research. 
}
}
\date{\today}
\begin{document}
\maketitle

\begin{abstract}
Consumers only discover at the first seller which product best fits their needs, then check its price online, then decide on buying. Switching sellers is costly.
Equilibrium prices fall in the switching cost, eventually to the monopoly level, despite the exit of lower-value consumers when changing sellers becomes costlier. 
More expensive switching makes some buyers exit the market, leaving fewer inframarginal buyers to the sellers. Marginal buyers may change in either direction, so for a range of parameters, all firms cut prices.

Keywords: Switching cost, transport cost, directed search, imperfect information, price competition. 
	
JEL classification: D82, C72, D43. 
\end{abstract}


The purchasing process starts from some need the buyer has. For example, to heat the home, to cycle, cut wood, puree food or treat symptoms. The buyer then searches for products that may satisfy the need. Initially, the buyer does not know what products are available and which fits best. The buyer may go to a physical shop or search online, in which case the top search results are the websites of various sellers. The seller or the website explains the characteristics of the product and its differences from the offerings of competitors.\footnote{A pharmacy is legally obliged to explain the best drug or device, which may be sold by a competitor. Sellers of power tools, sports equipment and heating appliances minimise liability by explaining.
} 
For example, the difference between a geothermal heat pump and an air-source one, a gravel bike and a hybrid, a chainsaw and a disc-saw, a blender and a juicer, a CPAP machine and an oxygen concentrator. From this, the buyer learns her valuations for the products. The buyer then looks up the prices online (using her phone if in a physical shop). The buyer then chooses which seller, if any, to buy from, taking into account the \emph{switching cost} of going to a different seller from the current one. For physical shops, the switching cost is a transport cost. 
Online, the cost includes navigating another seller's site, creating an account and completing the order form. 

In such markets for infrequently purchased goods, a greater switching cost causes lower prices, as this paper shows. The lower prices reduce profit, which explains why similar businesses co-locate to reduce the switching cost. Examples are car dealerships on major roads out of town, fashion shops on the same street, entertainment venues in a city district, department stores in a mall. Industry associations, which aim to benefit their members, frequently publish a member directory\footnote{
Grocers: \url{http://www.agbr.com/store-locator/}, restoration contractors: \url{https://www.iicrc.org/page/IICRCGlobalLocator}, notaries: \url{https://www.thenotariessociety.org.uk/notary-search}.
} that reduces buyers' switching cost. 
Easier switching between sellers seemingly increases competition and reduces profits---an intuition confirmed by most industrial organisation models. Exceptions are discussed in the literature review below and rely on different economic forces from the current work. 

In this paper, differentiated sellers obtain higher profits by reducing the switching cost, because a higher cost makes more buyers exit, which reduces inframarginal demand for all sellers. The change in marginal demand depends on the distribution of buyer valuations. If this distribution is increasing, then the mass of marginal buyers increases, which unambiguously reduces prices. More generally, if the buyer valuation distribution does not fall too fast, then the inframarginal effect outweighs the change in marginal demand. 
The reduction in inframarginal demand is greater with more sellers, so prices are more likely to decrease in the switching cost in more competitive markets. 

In more detail, the environment is an oligopoly of horizontally differentiated sellers who simultaneously set prices. Initially, each consumer goes to a random seller. There, the consumer privately observes her valuations for all sellers and all their prices. Valuation may be determined by the brand and for a future service (flight, medical treatment, in-person training) the location and the service time. 
The consumer may buy at the initial seller, exit the market, or switch to a competitor. In the baseline model, sellers charge the same price to all buyers.\footnote{Section~\ref{sec:extensions} shows the results are robust when the sellers are able to distinguish their initial customers from those switching in from rivals.
For example, websites track buyers' browsing history to segment them into switchers and captive customers. 
} 

A greater switching cost causes some consumers to exit, because it becomes too costly to change sellers. Exit leaves less inframarginal demand in the market. If the sellers are similar enough, then all sellers obtain fewer inframarginal customers. If the valuation distribution is increasing, then the mass of marginal consumers rises in the switching cost. The optimal price decreases in the ratio of inframarginal to marginal buyers. 
Under mild conditions, the equilibrium is pure and unique. Prices are strategic complements, so the comparative statics are the same in all stable equilibria.

\textbf{Empirical evidence} 

Car sellers publish detailed descriptions of cars and prices online and also reduce consumers' switching costs by co-locating. Often, many car dealerships are adjacent next to a major road. 
In Table 11 of \cite{murry+zhou2020}, closing a co-located car seller reduces its next-door rivals' prices by 0.1\%. 

In a shopping mall, similar shops are adjacent: groceries on the ground floor, jewellery, cosmetics and clothes on higher floors. This reduces the cost of switching between them. The current paper proves that prices and profits increase as a result. 
Empirically, Table 8 of \cite{vitorino2012} shows that the profits of midscale department stores (e.g., Mervyn's, JC Penney) co-located in the same mall increase in their number.

\subsection*{Literature} 
The literature on price competition with switching costs between horizontally differentiated sellers mostly finds that prices increase in the cost. For example, in the Hotelling model with a strictly convex transport cost \citep{aspremont1979}. Articles in which prices decrease in the switching cost assume either multiperiod markets or add a countervailing force (e.g., advertising) to a higher switching cost. The present paper shows this effect in a simpler one-shot, one-product, no-advertising environment and identifies buyer exit as a novel cause for it. The exit effect is strong enough to make the price decrease strict for any positive switching cost at which some consumers still switch firms. 

The closest paper to the current one is \cite{heinsalu2020searchcost}, which also assumes that consumers discover their valuations for all goods at the first firm, but not the prices of other goods, unlike in this work. The results are similar, but the economic force is hold-up of switchers. Sellers would like to charge more than the switchers expected. The initial buyers leaving deters sellers from raising prices. The hold-up motive weakens at greater search cost, because fewer buyers search and switch.

\textbf{Multiperiod markets.}
\cite{klemperer1987} shows that if consumers in a multiperiod market face a switching cost after the first period, then firms compete in the first period to lock in customers. Higher switching costs intensify this competition and may cause below-cost prices in the first period. 
In \cite{choi+2018} and \cite{haan+2018}, firms advertise prices and compete to be the first that a consumer inspects in her ordered search. Prices fall in the consumer's cost of searching subsequent firms. The search cost effectively locks in the consumer. The current paper assumes firms do not advertise prices, which are unknown before the consumer visits the first seller. This shuts down the search direction effect and lock-in. Instead, the price decrease in the switching cost is due to exit, which changes the shape of demand (marginal and inframarginal consumers). 

In \cite{dube+2009,cabral2009,cabral2016}, in an infinite horizon model, a medium cost of switching causes lower prices than no cost. Reducing price to attract new customers turns out to be more important for the sellers than extracting more from existing customers with a high price. For switching costs high enough, prices still increase in the cost. 
The present paper focusses on a one-shot environment instead of investing in customer acquisition over time. Prices decrease in the switching cost throughout. 

\cite{cabral+villas-boas2005} state that in multiperiod competition when consumers are myopic, greater switching costs reduce profits and prices. 

\textbf{Countervailing forces.} 
Advertising to attract initial visits is widely studied. \cite{armstrong+zhou2011} examine a Hotelling duopoly---buyer valuations for the products are perfectly negatively correlated. A buyer learns her valuations at the first seller, as in the current work, but firms advertise prices or hire agents to make consumers initially direct their search to the firm. The current work features initially undirected search, then targeted buying. 

\cite{lal+sarvary1999} consider adding an online shop to a physical one. This decreases search costs but raises switching costs, because re-ordering a previous product is easy. For some parameter values, prices decrease in the switching cost, because the lower search cost dominates. In the current paper, the switching cost is the only force.

\textbf{Other related work.}
A large literature starting from \cite{diamond1971} examines search costs---consumers who visit a seller still have to pay to observe the valuations or prices of other products. \cite{wolinsky1986} is widely cited in this literature. The current work differs in that the first visit makes consumers perfectly informed about all prices and valuations. 

In \cite{mauring2021}, consumers learn prices sequentially at a cost, but valuations are commonly known. Firms may know a consumer's search cost or not, which may be positive or zero. Profit may rise in the number of firms, unlike in the current paper. Thus more competition may harm consumers. 

\cite{schultz2005} models a repeated Hotelling game, so valuations are perfectly negatively correlated. The larger the fraction of shoppers with zero search cost among buyers, the harder collusion is to sustain, so prices increase in the search cost, unlike in this work.

The next section sets up the model. Section~\ref{sec:results} derives the results. Section~\ref{sec:extensions} discusses extensions and implications, and concludes.

\section{The search for products followed by targeted buying}
\label{sec:model}

An oligopoly of $n$ firms indexed by $i$ simultaneously set prices $P_i$. The subscript $-i$ denotes firms other than $i$. 
There is a mass $1$ of consumers indexed by $v=(v_{i})$, where $v_{i}\in[0,1]$ is the consumer's valuation for firm $i$'s product. The valuation $v_i$ for firm $i$ is distributed according to $f_i$, which is positive with interval support. 
The corresponding cdf is denoted $F_i$. The valuations are independent across firms: $v_i\perp v_{-i}\;\forall i$. 


Fraction $\mu_i\in(0,1)$ of consumers initially arrive at firm $i$, independently of $v,P_i$. Clearly $\sum_{i=1}^{n}\mu_i=1$. Call the firm at which a consumer initially arrives the \emph{initial firm} of the consumer. After the consumer arrives, she observes her valuation for each firm and all prices (e.g., checks these online). Firms only have the common prior on the valuations. Each consumer decides whether to buy from her initial firm, change to the best other firm at cost $s>0$ and buy there, or exit. Because the consumer knows the prices and her valuations, she changes firms iff she buys at the other firm (never switches back to her initial firm or some other firm or exits after a change). 

The payoff from exiting is normalised to zero. 
A consumer with valuation $v$ who buys from her initial firm $i$ at price $P_i$ obtains payoff $v_i-P_i$. After changing firms, her payoff is $v_j-P_j-s$, where $j\neq i$. 

The marginal cost of firm $i$ is $c_i$. Firm $i$ that sets price $P_i$ resulting in \emph{ex post} demand $D_i$ gets \emph{ex post} profit $\pi_i:=(P_i-c_i)D_i$. 
Some results restrict attention to two firms, symmetric initial demands $\mu_i=\frac{1}{n}\;\forall i$ or symmetric marginal costs $c_i=0\;\forall i$, which should be apparent from the formulas. 

W.l.o.g.\ restrict $P_i\in\left[0,1\right]$, because a price that is negative or above the maximal valuation of consumers is never a unique best response. 
A mixed strategy of firm $i$ is the cdf $\sigma_i$ on $[0,1]$. 
Mixed prices on or off the equilibrium path are ruled out in Lemma~\ref{lem:pureknownP} below. 

Subgame perfect equilibrium consists of prices and consumer decisions such that (i) each firm maximises expected profit, given the decisions it expects from the rival firm(s) and the consumers. (ii) Consumers maximise their payoff by choosing to buy from their initial firm, change to buy from the other firm or exit. (iii) The expectations of the firms are correct. 

Denote the decision of a consumer with valuation $v$ who observes prices $P=(P_i)_{i=1}^{n}$ to buy from firm $j$ by $D(v,P)=j$ and to exit by $D(v,P)=0$. 
\begin{defi}
\label{def:equil}
Equilibrium is $P^* \in[0,1]^n$ and $D^*:[0,1]^{2n}\rightarrow \set{0,1,\ldots,n}$ such that for all $i$ and $v$, 
(i) $P_i^*\in\arg\max_{P_i}(P_i-c_i)D_i(P_i,P_{-i}^*)$, where $D_i(P) 
=\int_{\set{v:D^*(v,P)=i}}dF_1(v_1)\cdots dF_n(v_n)$, 
(ii) $D^*(v,P)=i\Rightarrow i\in\arg\max_{j}(v_j-P_j)$, (iii) $P_i=P_i^*$. 
\end{defi}

Asymmetric equilibria are allowed. With symmetric firms, Lemma~\ref{lem:stratcomplknownP} below provides sufficient conditions for all equilibria to be symmetric. 

The following section first derives the demand for a firm in terms of the valuation distributions and prices, then establishes the strategic complementarity of prices and uniqueness of equilibrium. The section concludes with the main result: the comparative statics of prices in the switching cost.

\section{The price response to the switching cost}
\label{sec:results}

A firm's demand consists of (1) consumers initially at that firm who buy immediately and (2) consumers initially at the rival firm who change firms. 
In a duopoly, the demand that firm $i$ expects from price $P_i$ when it expects firm $j$ to choose pricing strategy $\sigma_j^*$ is 
\begin{align}
\label{DknownP}
&\notag D_{i}(P_i,\sigma_j^*) 
=\mu_{i}\int_{c_j}^{1}\int_0^1\int_{P_i +\max\set{0,v_{j}-P_j^*-s}}^1f_i(v_{i})f_j(v_{j})dv_idv_jd\sigma_j^*(P_j^*) 
\\& +\mu_{j}\int_{c_j}^{1}\int_0^1\int_{P_i+s+\max\set{0,v_{j}-P_{j}^*}}^1f_i(v_{i})f_j(v_{j})dv_idv_jd\sigma_j^*(P_j^*).
\end{align} 
The outer integral in~(\ref{DknownP}) reflects firm $i$'s expectation over the prices of firm $j$. 

The next lemma establishes pure best responses of the firms in the two-firm case. 
Several alternative sufficient conditions for pure BR are available: either demand is log concave or each ccdf $1-F_i$ is concave or the densities of consumer valuations do not decrease too fast. Any weakly increasing density satisfies these, for example, uniform. Other examples satisfying these are truncated exponential, truncated normal, truncated Pareto with a negative location parameter and any $f_i$ that decreases slower than $\exp(-v_i^2 /4)$. 

The conditions in Lemma~\ref{lem:pureknownP} are far from necessary for pure equilibria. For example, if just one firm has a unique pure best response to any undominated strategy of the competitor, then generically the equilibrium is pure. 
\begin{lemma}
\label{lem:pureknownP}
If $D_i$ is log concave or $1-F_i(v_i)$ is concave or $(P_i-c_i)\frac{\partial f_i(P_i+w)}{\partial P_i}\geq - 2f_i(P_i+w)$ for all $P_i\in(c_i,1)$ and $P_i+w\in(c_i,1)$ for each firm $i$, then each has a pure best response to any $\sigma_j^*$.  
\end{lemma}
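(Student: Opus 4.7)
The plan is to show that under each of the three hypotheses the profit function $\pi_i(P_i) = (P_i - c_i)\, D_i(P_i, \sigma_j^*)$ is single-peaked on $(c_i, 1)$, so firm $i$'s best-response correspondence is pure. Because $D_i$ is continuous in $P_i$ (immediate from (\ref{DknownP})), $\pi_i$ attains a maximum on the compact interval $[0,1]$; the work is to rule out flat plateaus or multiple peaks. The first step is to integrate out $v_i$ in (\ref{DknownP}) and rewrite demand in the unified form
\[ D_i(P_i, \sigma_j^*) = \int [1 - F_i(P_i + w)]\, d\nu(w), \]
where the ``threshold shift'' $w$ bundles $v_j$, the random rival price $P_j^*$, the switching cost $s$, and the initial-arrival weights $\mu_i, \mu_j$ into a single distribution $\nu$ that does not depend on $P_i$. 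This reduction puts the three hypotheses on a common footing.

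Case (a): log-concavity of $D_i$ together with log-concavity of $P_i - c_i$ makes $\log \pi_i$ a sum of concave functions, so $\pi_i$ is log-concave. Case (b): concavity of $1 - F_i$ is preserved under integration against $\nu$, so $D_i$ is concave and positive; any non-negative concave function is log-concave (from $(\log g)'' = g''/g - (g'/g)^2 \le 0$), reducing to (a). Log-concavity of $\pi_i$ gives quasi-concavity on the positive range, hence a connected argmax. Case (c): differentiating twice under the integral yields
\[ \pi_i''(P_i) = -2 \int f_i(P_i + w)\, d\nu(w) - (P_i - c_i) \int \frac{\partial f_i(P_i+w)}{\partial P_i}\, d\nu(w), \]
and the hypothesis, applied pointwise in $w$ inside the integral, forces $\pi_i'' \le 0$ on $(c_i, 1)$; so $\pi_i$ is concave there, again with a connected argmax.

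The main obstacle will be passing from concavity or log-concavity to a genuinely single-valued best response rather than a flat plateau of maximizers. The resolution is twofold: first, on any such plateau the firm's payoff is the same across all prices, so selecting any one of them is a pure best response, which is all the lemma claims; second, strict versions of any of the three hypotheses (which hold for the examples catalogued after the lemma, such as uniform, truncated exponential, and truncated normal) collapse the plateau to a point, delivering the sharper uniqueness that the subsequent lemmas rely on. A minor routine step in case (c) is justifying the interchange of differentiation and integration under $\nu$, which follows from $f_i$ and $f_i'$ being bounded on the compact interval $[0,1]$ and $\nu$ being a finite measure.
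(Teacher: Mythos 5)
Your proposal is correct and follows essentially the same route as the paper: case (a) via log-concavity of $(P_i-c_i)D_i$, case (b) by noting that concavity of $1-F_i$ passes through the integral to make $D_i$ concave hence log-concave, and case (c) by computing the second-order condition and applying the pointwise bound on $(P_i-c_i)f_i'+2f_i$. Your rewriting of demand as $\int[1-F_i(P_i+w)]\,d\nu(w)$ is only a notational compression of the paper's explicit mixture over $\mu_i,\mu_j$, $v_j$ and $P_j^*$, and your plateau remark is a harmless addition since the lemma claims only existence of a pure best response.
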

\begin{proof}[Proof of Lemma~\ref{lem:pureknownP}]
If $D_i$ in~(\ref{DknownP}) is log concave, then log profit is concave, thus firm $i$ has a pure BR. 

If $1-F_i(v_i)$ is concave, then its double integral~(\ref{DknownP}) is concave, because the sum of concave functions is concave and the integral is the limit of a sum. Concavity of $D_i$ implies log concavity. 

Using~(\ref{DknownP}), firm $i$'s first order condition (FOC) is
\begin{align}
\label{focknownP}
\frac{\partial \pi_i(P_i,\sigma_j^*)}{\partial P_i} 
&\notag=\int_{c_j}^{1}\int_0^1\left[1-\mu_iF_i\left(P_i +\max\set{0,v_{j}-P_j-s}\right)\right.
\\& -\mu_jF_i\left(P_i+s+\max\set{0,v_{j}-P_{j}^*}\right)
\\&\notag -(P_i-c_i)\mu_if_i\left(P_i +\max\set{0,v_{j}-P_j^*-s}\right) 
\\&\notag  \left.-(P_i-c_i)\mu_jf_i\left(P_i+s+\max\set{0, v_{j}-P_{j}^*}\right)\right]dF_j(v_j)d\sigma_j^*(P_j^*). 
\end{align}

If prices were unknown before changing firms, then the last line of~(\ref{focknownP}) would be multiplied by $\text{\textbf{1}}_{\set{P_i>P_{iCE}+s}}$, making~(\ref{focknownP}) larger at any prices, raising the best responses and equilibrium prices. 

The second order condition (SOC) with known prices is
\begin{align}
\label{socknownP}
\frac{\partial^2 \pi_i}{\partial P_i^2} 
&=-\left.\int_{c_j}^{1}\int_0^1 \right[ 2\mu_i f_i\left(P_i +\max\set{0,v_{j}-P_j^*-s}\right) 
\\&\notag +2\mu_j f_i\left(P_i+s +\max\set{0,v_j-P_j^*} \right)  
\\&\notag  +(P_i-c_i)\mu_i \frac{\partial f_i\left(P_i +\max\set{0,v_{j}-P_j^*-s}\right)}{\partial P_i}
\\&\notag \left. +(P_i-c_i)\mu_j \frac{\partial f_i\left(P_i +s +\max\set{0,v_{j}-P_j^*}\right)}{\partial P_i}\right]dF_j(v_j)d\sigma_j^*(P_j^*). 
\end{align} 


Sufficient for $\frac{\partial^2 \pi_i}{\partial P_i^2} <0\;\forall P_i\in(c_i,1)$ is $\int_0^1[(P_i-c_i)\frac{\partial f_i(P_i+w)}{\partial P_i} +2 f_i(P_i+w)]dF_j(v_j)\geq 0$ for all $P_i\in(c_i,1)$ and $P_i+w\in(c_i,1)$. 
This is ensured if $(P_i-c_i)\frac{\partial f_i(P_i+w)}{\partial P_i}\geq -2 f_i(P_i+w)$ for all  $P_i\in(c_i,1)$ and $P_i+w\in(c_i,1)$. 
Then the best response (BR) of firm $i$ to any $\sigma_j^*$ is pure and unique. 
\end{proof}

Assume pure prices from now on, which in a duopoly is w.l.o.g. 
Firm $i$ expects customers at competitor $k$ to switch to $i$ if 
$v_i-P_{i}-s >\max\{0,v_{k}-P_{k},\max_{j\neq i,k}\{v_{j}-P_{j}^*-s\}\}$, thus the $v_i$ cutoff above which consumers at firm $k$ switch to $i$ to be 
\begin{align}
\label{inswitchvcutknownP}
\inswitchvcutknownP:= P_{i} +\max\set{s,\; v_{k}-P_k^*+s,\; \max_{j\neq i,k}\set{v_{j}-P_{j}^*}}.
\end{align} 
A firm's price always affects the buying decision of consumers initially at other firms, unlike when buyers do not know prices before switching. 
The demand that firm $i$ expects is
\begin{align}
\label{Dpureoligopoly}
&\notag D_{i}(P_i,P_{-i}^*) =
\int_{[0,1]^{n-1}}\left\{\mu_i\left[1-F_i\left(P_i +\max\set{0,\max_{j\neq i}\set{v_{j}-P_{j}^*-s}}\right)\right]\right.
\\&\left.+\sum_{k\neq i}\mu_k\left[1-F_i\left(\inswitchvcutknownP\right)\right]\right\}dF_{-i}(v_{-i}). 
\end{align} 

The next lemma establishes the strategic complementarity of prices on and off the equilibrium path for any number of firms $n\geq2$. 
Several alternative sufficient conditions are available: either demand is log concave or the ccdf $1-F_i$ is concave or the densities of consumer valuations do not decrease too fast. Any weakly increasing density satisfies the latter, for example, uniform, triangular $f_i(v_i)=2v_i$, trapezoidal $f_i(v_i)=\min\set{kv_i,k-\sqrt{k^2-2k}}$ for $k>2$.
The truncated exponential and truncated normal distributions also satisfy the conditions, as does truncated Pareto with a negative location parameter. 

\begin{lemma}
\label{lem:stratcomplknownP}
If demand is log concave or $1-F_i$ concave or $f_i(P_i+w) +(P_i-c_i)\frac{\partial f_i(P_i+w)}{\partial P_i}\geq0$ for all $P_i\in(c_i,1)$ and $P_i+w\in(c_i,1)$, for each firm $i$, then prices are strategic complements, and 
decrease in $n$. 

If prices are strategic complements and the firms are symmetric, then any equilibrium is symmetric.
\end{lemma}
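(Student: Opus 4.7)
The plan is to address the three assertions in turn.

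For strategic complementarity I would compute the cross partial $\partial^2 \pi_i / \partial P_i \partial P_k^*$ for each $k \neq i$ by extending the FOC of~(\ref{focknownP}) to $n$ firms via~(\ref{Dpureoligopoly}). Every cutoff inside $D_i$ has the form $P_i + h(v, P_{-i}^*)$ with $h$ a running maximum whose slope in $P_k^*$ equals $-1$ on the subset $A_k(v)$ of $v$-space where $P_k^*$ attains that maximum and $0$ off it. The cross partial therefore reduces to a sum over the relevant cutoffs of integrals
\begin{align*}
\int_{A_k(v)} \Bigl[f_i(P_i+h) + (P_i - c_i)\tfrac{\partial f_i(P_i+h)}{\partial P_i}\Bigr]\, dF_{-i}(v_{-i}),
\end{align*}
whose bracket is exactly the one from Lemma~\ref{lem:pureknownP}. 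The pointwise density-derivative inequality forces this bracket to be non-negative directly; the remaining two sufficient conditions are handled by parallel arguments that exploit the additive price-translation structure of the cutoffs $P_i + h$ together with log concavity (or concavity) of $D_i$.

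For the decrease in $n$ I would specialise to the symmetric equilibrium pinned down by the third assertion. Adding a rival shrinks $\mu_i = 1/n$ and stochastically enlarges the outside-option maximum $\max_{j \neq i}(v_j - P^* - s)$, so at any common price $P$ the own-inframarginal demand $\mu_i \Pr[v_i \geq P + \max\{0, \max_{j\neq i}(v_j - P - s)\}]$ strictly falls, while the marginal mass at the switching cutoff near $P + s$ retains comparable magnitude across $n$. The symmetric FOC $D_i + (P - c)\partial D_i / \partial P_i = 0$ is thereby violated from above at the old equilibrium price, and strategic complementarity propagates the fall through all equilibria.

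For the symmetry claim, symmetry of firms produces a common best-response function $B$ that is symmetric in its arguments and, by strategic complementarity, monotone non-decreasing in each. For any equilibrium $P^*$ and two firms $i, j$ with $P_i^* \leq P_j^*$, the identities $P_i^* = B(P_j^*, P_{-ij}^*)$ and $P_j^* = B(P_i^*, P_{-ij}^*)$ together with monotonicity in the first slot give $B(P_j^*, P_{-ij}^*) \geq B(P_i^*, P_{-ij}^*)$, i.e.\ $P_i^* \geq P_j^*$, forcing $P_i^* = P_j^*$; iterating over pairs yields full symmetry. The main obstacle I anticipate is the $n$-step, since it mixes the change in $\mu_i$ with the change in the distribution of the outside-option maximum and so is a genuine monotone-comparative-statics exercise rather than a one-line consequence of supermodularity.
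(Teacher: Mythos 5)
Your treatment of the cross-partial is essentially the paper's argument: differentiating the $n$-firm FOC built from~(\ref{Dpureoligopoly}) with respect to $P_k^*$ picks up $-\partial h/\partial P_k^*\in\{0,1\}$ times the bracket $f_i(P_i+h)+(P_i-c_i)\frac{\partial f_i(P_i+h)}{\partial P_i}$ on the event where $P_k^*$ attains the running maximum, and the stated density condition signs it. (Your parenthetical claim that the log-concavity and concave-ccdf cases follow by ``parallel arguments'' is left as vague as it is in the paper itself, so I will not press it.) Your symmetry argument differs from the paper's --- you compare two firms inside one equilibrium via the common best-response map $B$ and monotonicity in the slot occupied by the other firm's price, whereas the paper compares an asymmetric equilibrium with its relabelled permutation and derives a contradiction from the minimal price. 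Both are valid; yours is the more standard and arguably cleaner route, provided you note that the hypotheses of Lemma~\ref{lem:pureknownP} are in force so that $B$ is single-valued.

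The genuine gap is the step you yourself flag: prices decreasing in $n$. Your decomposition --- ``inframarginal demand falls, the marginal mass retains comparable magnitude'' --- cannot be signed as stated, because the FOC is $D_i+(P_i-c_i)\partial D_i/\partial P_i$ and if the marginal mass also falls (which it may, since the weights $\mu_k$ and the cutoffs both move), the sum can go either way. The resolution, which is the paper's one-line observation, is to bundle the marginal and inframarginal terms: under $f_i(x)+(P_i-c_i)f_i'(x)\geq0$ the map $x\mapsto 1-F_i(x)-(P_i-c_i)f_i(x)$ is non-increasing, and the switcher cutoff $\inswitchvcutknownP$ always weakly exceeds the initial-customer cutoff $P_i+\max\set{0,\max_{j\neq i}\set{v_j-P_j^*}-s}$. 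Hence the per-consumer FOC integrand is weakly smaller for incoming switchers than for initial customers, and both effects of raising $n$ --- shifting weight $\mu$ from the initial-customer term to the switcher terms, and stochastically enlarging $\max_{j\neq i}\set{v_j-P_j^*}$ and thereby both cutoffs --- push the same decreasing function further down. The FOC therefore falls pointwise at any price vector, the best response falls, and supermodularity carries this to the (extremal) equilibria. Without this monotonicity of the combined integrand your sketch does not close, and it is exactly the same bracket you already used for the cross-partial, so no new hypothesis is needed.
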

\begin{proof}
Using~(\ref{Dpureoligopoly}), firm $i$'s expected profit has the derivative 
\begin{align}
\label{focpureoligopolyknownP}
&\notag\frac{\partial \pi_i(P_i,P_{-i}^*)}{\partial P_i} 
=\int_{[0,1]^{n-1}}\left\{\mu_i\left[1-F_{i}\left(P_i +\max\set{0,\max_{j\neq i}\set{v_{j}-P_{j}^*}-s}\right)\right]\right.
\\&-\mu_i(P_i-c_i)f_i\left(P_i +\max\set{0,\max_{j\neq i}\set{v_{j}-P_{j}^*}-s}\right)
\\&\notag \left.+\sum_{k\neq i}\mu_k\left[1-F_{i}\left(\inswitchvcutknownP\right) -(P_i-c_i) f_i\left(\inswitchvcutknownP\right)\right]\right\}dF_{-i}(v_{-i}). 
\end{align}

Because $\inswitchvcutknownP \geq P_i +\max\set{0,\max_{j\neq i}\set{v_{j}-P_{j}^*}-s}$, if $f_i(P_i+w) +(P_i-c_i)\frac{\partial f_i(P_i+w)}{\partial P_i}\geq0$, then $F_{i}\left(\inswitchvcutknownP\right) +(P_i-c_i) f_i\left(\inswitchvcutknownP\right) \geq F_{i}\left(P_i +\max\set{0,\max_{j\neq i}\set{v_{j}-P_{j}^*}-s}\right) +(P_i-c_i)f_i\left(P_i +\max\set{0,\max_{j\neq i}\set{v_{j}-P_{j}^*}-s}\right)$ in~(\ref{focpureoligopolyknownP}). Increasing $n$ puts more weight on $1-F_{i}\left(\inswitchvcutknownP\right) -(P_i-c_i) f_i\left(\inswitchvcutknownP\right)$, which decreases~(\ref{focpureoligopolyknownP}) and the best response price. 

The second derivative of $i$'s expected profit is
\begin{align}
\label{socpureoligopolyknownP}
&\notag\frac{\partial^2 \pi_i(P_i,P_{-i}^*)}{\partial P_i^2} 
=\int_{[0,1]^{n-1}}\left[-2\mu_if_{i}\left(P_i +\max\set{0,\max_{j\neq i}\set{v_{j}-P_{j}^*}-s}\right)\right.
\\& -\mu_i(P_i-c_i)f_i'\left(P_i +\max\set{0,\max_{j\neq i}\set{v_{j}-P_{j}^*}-s}\right)
\\&\notag -\sum_{k\neq i}2\mu_kf_{i}\left(\inswitchvcutknownP\right) -\left.\sum_{k\neq i}\mu_k(P_i-c_i) f_i'\left(\inswitchvcutknownP\right)\right]dF_{-i}(v_{-i}). 
\end{align}

The game is supermodular if $\frac{\partial^2 \pi_i}{\partial P_i\partial P_j^*}\geq0\;\forall i,j$, in which case prices are strategic complements.  

The cross-partial derivative is 
\begin{align}
\label{crosspartialpureoligopolyknownP}
&\notag\frac{\partial^2 \pi_i(P_i,P_{-i}^*)}{\partial P_i\partial P_k^*} 
=\int_{[0,1]^{n-1}}\left[\text{\textbf{1}}_{\set{v_{k}-P_{k}^* =\max_{j\neq i}\set{v_{j}-P_{j}^*}\geq 0}}2\mu_if_{i}\left(P_i +\max\set{0,v_{k}-P_{k}^*-s}\right)\right.
\\&\notag 
 +\text{\textbf{1}}_{\set{v_{k}-P_{k}^* =\max_{j\neq i}\set{v_{j}-P_{j}^*}\geq 0}}\mu_i(P_i-c_i)f_i'\left(P_i +\max\set{0,v_{k}-P_{k}^*-s}\right)
\\& \left. +\mu_k\text{\textbf{1}}_{\set{v_{k}\geq P_{k}^*}} f_{i}\left(\inswitchvcutknownP\right) +\mu_k(P_i-c_i)\text{\textbf{1}}_{\set{v_{k}\geq P_{k}^*}} f_i'\left(\inswitchvcutknownP\right)\right]dF_{-i}(v_{-i}). 
\end{align}
A sufficient condition for strategic complementarity for any $n\geq2$ is $(P_i-c_i)\frac{\partial f_i(P_i+w)}{\partial P_i}\geq -f_i(P_i+w)$ for all $P_i\in(c_i,1)$ and $P_i+w\in(c_i,1)$. 

In an asymmetric equilibrium, the prices of at least two firms (labelled $1$, $2$ w.l.o.g.) are distinct. Any asymmetric equilibrium in a symmetric game remains an equilibrium after any permutation of player labels, so if an asymmetric equilibrium exists, then there exist at least two. 
If the firms are symmetric, then the best response of firm $i$ to $P_{-i}$ only depends on $\set{P_{j}:j\neq i}$, not the order of $P_j$, $j\neq i$ in $P_{-i}$. 

Suppose there exist asymmetric equilibria with price vectors $(P_i^*)\neq (P_i^{**})$ that are permutations of each other, i.e.\ $\set{P_i^*} =\set{P_j^{**}}$. W.l.o.g.\ label the firms so that 
$P_1^* =\min_jP_j^* <P_1^{**}$. 
Then $\set{P_j^{**}:j>1} =\set{P_j^*:j>1}\cup \set{\min_jP_j^*}\setminus \set{P_1^{**}}$, i.e.\ all elements but one coincide in $\set{P_j^{**}:j>1}$ and $\set{P_j^*:j>1}$, and the remaining element is strictly smaller in $\set{P_j^{**}:j>1}$. 
Strategic complements then imply that the best response to $\set{P_j^{**}:j>1}$ is less than the best response to $\set{P_j^*:j>1}$, contradicting $P_1^{**} >\min_jP_j^{**} =P_1^*$. 
\end{proof}

The conditions in Lemma~\ref{lem:stratcomplknownP} are sufficient but not necessary for prices to be strategic complements. 

The next lemma shows that in a duopoly, the equilibrium is unique if the consumer valuation pdf does not jump up (implied by continuity) and satisfies the conditions for a pure best response in Lemma~\ref{lem:pureknownP}. 
\begin{lemma}
\label{lem:uniqueknownP}
If $n=2$, $f_i$ has no upward jumps and either $D_i$ is log concave or $1-F_i(v_i)$ is concave or $(P_i-c_i)\frac{\partial f_i(P_i+w)}{\partial P_i}\geq - 2f_i(P_i+w)$ for all $P_i\in(c_i,1)$ and $P_i+w\in(c_i,1)$ for each firm $i$, then the equilibrium is unique. 
\end{lemma}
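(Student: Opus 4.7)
My plan is to show that in the duopoly case each firm's best response $b_i:[0,1]\to[0,1]$ is a continuous function whose derivative lies in $[0,1)$, so that the composition $b_1\circ b_2$ is a strict contraction on $[0,1]$ and has a unique fixed point by Banach's fixed-point theorem. The pair of prices associated with that fixed point then forms the unique equilibrium.

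By Lemma~\ref{lem:pureknownP}, under the stated hypotheses each firm's best response to any rival pure strategy is a single interior point. The assumption that $f_i$ has no upward jumps makes $F_i$ continuous, so $D_i$ is continuous in $P_j^*$ and hence so is $b_i$. At an interior optimum the FOC~(\ref{focknownP}) holds. Applying the implicit function theorem, with the SOC~(\ref{socknownP}) strictly negative by assumption, yields $b_i'(P_j^*)=-(\partial^2\pi_i/\partial P_i\partial P_j^*)/(\partial^2\pi_i/\partial P_i^2)$, which is nonnegative by Lemma~\ref{lem:stratcomplknownP}. To bound it strictly below one, I would compare the SOC~(\ref{socknownP}) against the cross-partial~(\ref{crosspartialpureoligopolyknownP}) term by term for $n=2$. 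The SOC carries the $f_i$-coefficient $2(\mu_i+\mu_j)=2$ (using $\mu_i+\mu_j=1$) integrated over all $v_j\in[0,1]$, whereas the cross-partial carries an $f_i$-coefficient of at most $2\mu_i+\mu_j<2$ and is integrated only over the sub-region where $v_j\ge P_j^*$. Combining this with the analogous comparison for the $f_i'$ terms gives $b_i'(P_j^*)<1$ pointwise, hence uniformly on $[0,1]$ by continuity on the compact domain.

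The main obstacle is the $f_i'$ part of the comparison, since those terms can be negative and partially offset the $f_i$ contributions when the density is steeply decreasing. The pointwise bound $(P_i-c_i)f_i'\ge -2 f_i$ inherited from Lemma~\ref{lem:pureknownP} keeps the SOC integrand dominant on average, but does not by itself pin down the sign of the SOC-minus-cross-partial difference. The no-upward-jumps hypothesis on $f_i$ plays its decisive role at the boundary: the indicator defining the sub-region shifts with $P_j^*$, and if $f_i'$ contained a positive Dirac mass at the threshold value $v_j = P_j^*$, that mass would appear in the cross-partial but not in the SOC and could destroy the strict inequality required for contraction. Ruling out upward jumps of $f_i$ rules out precisely such atoms, delivers $b_i'<1$ uniformly, and hence yields uniqueness of the equilibrium via the contraction $b_1\circ b_2$.
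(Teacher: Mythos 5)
Your core computation is the paper's: both proofs reduce uniqueness to a slope condition on the best responses and verify it by comparing the cross-partial $\partial^2\pi_i/\partial P_i\partial P_j$ with the own second derivative $\partial^2\pi_i/\partial P_i^2$ for $n=2$, exploiting that the $(P_i-c_i)f_i'$ terms cancel on the common region of integration while the second derivative carries the $f_i$ term with coefficient $2(\mu_i+\mu_j)=2$ over all of $[0,1]$ against the cross-partial's smaller coefficient on a sub-region. This yields $\frac{\partial^2 \pi_i^*}{\partial P_i\partial P_j} +\frac{\partial^2 \pi_i^*}{\partial P_i^2}<0$, i.e.\ a best-response slope below one, exactly as in the paper.

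The gap is in how you convert this into uniqueness. Banach's theorem for $b_1\circ b_2$ needs $|b_1'b_2'|<1$, and you obtain the lower bound $b_i'\ge 0$ by citing Lemma~\ref{lem:stratcomplknownP}. But that lemma's density condition is $(P_i-c_i)\frac{\partial f_i(P_i+w)}{\partial P_i}\ge -f_i(P_i+w)$, strictly stronger than the $\ge -2f_i(P_i+w)$ assumed in Lemma~\ref{lem:uniqueknownP}, so strategic complementarity is not available under the present hypotheses; your comparison only delivers the one-sided bound $b_i'<1$, and if $b_i'$ can fall below $-1$ the composition need not be a contraction. The paper instead uses the weaker fixed-point fact that a map with no upward jumps whose slope is below one at every fixed point has a unique fixed point, which only requires the inequality you actually proved; that is also the real role of the ``no upward jumps'' hypothesis (it keeps the best responses from jumping up), not the Dirac-mass-at-$v_j=P_j^*$ story you tell --- the kink there sits in the $\max\{0,v_j-P_j^*\}$ argument, not in $f_i$. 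To repair your version, either assume the condition of Lemma~\ref{lem:stratcomplknownP} or establish the second inequality $\frac{\partial^2 \pi_i^*}{\partial P_i\partial P_j} -\frac{\partial^2 \pi_i^*}{\partial P_i^2}>0$ directly.
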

\begin{proof}
Profit after imposing the equilibrium condition $P_i =P_i^*$ on both firms is denoted $\pi_i^*$. 
Sufficient for uniqueness is that the best responses do not jump up and the slopes of best responses are below $1$ at any equilibrium, i.e.,  
$\left|-\frac{\partial^2 \pi_i^*}{\partial P_i\partial P_j}\left/\frac{\partial^2 \pi_i^*}{\partial P_i^2}\right.\right| <1$ for each firm $i\neq j$. 
Equivalently, 
$\frac{\partial^2 \pi_i^*}{\partial P_i\partial P_j} +\frac{\partial^2 \pi_i^*}{\partial P_i^2}<0$. 
If the slope of a function with no upward jumps is $<1$ at any fixed point, then there is a unique fixed point. 

From~(\ref{focpureoligopolyknownP}) and~(\ref{crosspartialpureoligopolyknownP}) for $n=2$, 
\begin{align*}
&\frac{\partial^2 \pi_i^*}{\partial P_i\partial P_j} +\frac{\partial^2 \pi_i^*}{\partial P_i^2} 
=\int_{P_j+s}^1 \mu_i\left[f_i\left(P_i +v_{j}-P_j-s\right) +(P_i-c_i)\frac{\partial f_i\left(P_i +v_{j}-P_j-s\right)}{\partial P_i}\right]dF(v_j)
\\&\notag+\int_{P_j}^1 \mu_j\left[f_i\left(P_i+v_{j}-P_{j}+s\right) +(P_i-c_i)\frac{\partial f_i\left(P_i +v_{j}-P_j+s\right)}{\partial P_i}\right]dF(v_j)
\\&\notag 
+\int_0^1\mu_i\left[-2f_i\left(P_i +\max\set{0, v_{j}-P_j-s}\right)
-(P_i-c_i)\frac{\partial f_i\left(P_i +\max\set{0, v_{j}-P_{j}-s}\right)}{\partial P_i}\right]dF(v_j)
\\&\notag +\int_0^1 \mu_j\left[-2f_i\left(P_i +s +\max\set{0, v_{j}-P_{j}}\right) -(P_i-c_i)\frac{\partial f_i\left(P_i+s +\max\set{0, v_{j}-P_{j}}\right)}{\partial P_i}\right]dF(v_j). 
\end{align*}
The effect of the SOC dominates: the integral is from $0$ to $1$ instead of from $P_j+s$ or $P_j$ to $1$, and the $f_i$ term (with a clear sign) is multiplied by $2$. Therefore if the SOC holds, then $\frac{\partial^2 \pi_i^*}{\partial P_i\partial P_j} +\frac{\partial^2 \pi_i^*}{\partial P_i^2} <0$. 
\end{proof}

The conditions in Lemma~\ref{lem:uniqueknownP} are sufficient, but not necessary for uniqueness. 
Further, uniqueness is not necessary for the main result because strategic complementarity makes the direction of comparative statics in all stable equilibria the same. 

With any number of firms, sufficient for uniqueness is $(P_i-c_i)\frac{\partial f_i(P_i+w)}{\partial P_i}\leq -f_i(P_i+w)$ for all $P_i\in(c_i,1)$ and $P_i+w\in(c_i,1)$, which guarantees the dominant diagonal condition $\sum_j\frac{\partial^2 \pi_i^*}{\partial P_i\partial P_j} <0$, generalising Lemma~\ref{lem:uniqueknownP}. The condition is the opposite to the sufficient condition for strategic complementarity in Lemma~\ref{lem:stratcomplknownP}.

The main theorem establishes that if the consumer valuation pdf is weakly increasing and convex, then each firm's price decreases in the search cost of the consumers.  
Uniform valuations satisfy the condition, as do trapezoidal increasing valuation distributions. 
The proof uses the Implicit Function Theorem. 
\begin{thm}
\label{thm:mainknownP}
If $\mu_i\approx \mu_{\ell}\;\forall i,\ell$ and either $f_{i}\left(x\right) +P_if_i'\left(x\right) \geq0$ and $f_{i}'\left(x\right) +P_if_i''\left(x\right) \geq0$ for all $P_i\in(c_i,1)$ and $x\in[P_i,1)$ or $f_i',f_i'' \geq 0$, then $\frac{dP_i^*}{ds} \leq 0$ for all firms in any stable equilibrium. If further $s<1-P_i^*\;\forall i$ and the equilibrium is symmetric, then $\frac{dP_i^*}{ds} <0$. 
\end{thm}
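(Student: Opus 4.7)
The plan is to apply the Implicit Function Theorem to the FOC system (\ref{focpureoligopolyknownP}) and reduce the claim to the sign of the cross-partial $\partial^2\pi_i/(\partial P_i\partial s)$. At a symmetric equilibrium, symmetry forces $dP_j^*/ds$ to be independent of $j$, so totally differentiating firm $i$'s FOC in $s$ yields
\[
\frac{dP_i^*}{ds}\;=\;-\,\frac{\partial^2\pi_i/(\partial P_i\partial s)}{\partial^2\pi_i/\partial P_i^2\;+\;\sum_{j\neq i}\partial^2\pi_i/(\partial P_i\partial P_j^*)}.
\]
The denominator is strictly negative in any stable equilibrium (the negative SOC dominates the non-negative cross-partials delivered by Lemma~\ref{lem:stratcomplknownP}), so $dP_i^*/ds\le 0$ is equivalent to $\partial^2\pi_i/(\partial P_i\partial s)\le 0$. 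For general stable equilibria the same conclusion holds because stability plus strategic complementarity makes the negated FOC-Jacobian have entry-wise non-negative inverse.

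To evaluate the cross-partial, write the own-initial cutoff $A_i:=P_i+\max\{0,\max_{j\neq i}(v_j-P_j^*)-s\}$ and recall the switch-in cutoff $B_k:=\inswitchvcutknownP$. Direct differentiation gives $\partial A_i/\partial s=-\mathbf{1}\{\max_{j\neq i}(v_j-P_j^*)>s\}$ and $\partial B_k/\partial s=\mathbf{1}\{\max\{s,v_k-P_k^*+s\}\ge\max_{j\neq i,k}(v_j-P_j^*)\}$. Setting $g(x):=f_i(x)+P_if_i'(x)$, differentiating (\ref{focpureoligopolyknownP}) in $s$ produces
\[
\frac{\partial^2\pi_i}{\partial P_i\partial s}\;=\;-\!\int_{[0,1]^{n-1}}\!\!\Bigl\{\mu_i\,g(A_i)\,\tfrac{\partial A_i}{\partial s}+\sum_{k\neq i}\mu_k\,g(B_k)\,\tfrac{\partial B_k}{\partial s}\Bigr\}\,dF_{-i}(v_{-i}).
\]
The first stated density condition is exactly $g\ge 0$ on $[P_i,1)$, the second is $g'\ge 0$, and both are implied by $f_i',f_i''\ge 0$.

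The remaining step is pointwise non-positivity of the integrand. Partition $v_{-i}$ into case A ($\max_{j\neq i}(v_j-P_j^*)\le s$) and case B (otherwise). In case A, $\partial A_i/\partial s=0$ kills the own term and each switch-in contribution equals $-\mu_k g(B_k)\le 0$. In case B, let $j^*$ be the (a.s.\ unique) rival attaining the maximum; a short check shows $B_{j^*}=A_i+s$, while for $k\neq j^*$ either $B_k$ is pinned by $v_{j^*}-P_{j^*}^*$ (so $\partial B_k/\partial s=0$ and the contribution is $0$) or $B_k=v_k+s\ge A_i$ (contribution $-\mu_k g(B_k)\le 0$). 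The sole positive term is $+\mu_i g(A_i)$ from the own-initial cutoff, which is paired against $-\mu_{j^*} g(A_i+s)$: the monotonicity $g(A_i+s)\ge g(A_i)$ together with $\mu_i\approx\mu_{j^*}$ makes the pair non-positive, so $\partial^2\pi_i/(\partial P_i\partial s)\le 0$. For strictness, when $s<1-P^*$ at a symmetric equilibrium the sub-event $\{v_j<P^*+s\ \forall j\}$ lies in case A with positive probability, and on it $B_k\in(P^*,1)$ is interior to the support of $f_i$, giving $g(B_k)>0$ and hence $\partial^2\pi_i/(\partial P_i\partial s)<0$.

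The main obstacle is case B: the own-cutoff $A_i$ \emph{falls} in $s$ (fewer initial buyers defect when switching becomes dearer), contributing a strictly \emph{positive} term to the cross-partial that, in isolation, would push the best response up. Neutralising this requires three ingredients acting simultaneously, and all three hypotheses of the theorem are consumed precisely here: the identity $B_{j^*}=A_i+s$ linking the two opposing cutoff movements through a single evaluation of $g$, the monotonicity $g'\ge 0$ so that $g(A_i+s)\ge g(A_i)$, and the near-symmetry $\mu_i\approx\mu_{j^*}$ so that the positive own term and negative switch-in term from the argmax rival carry comparable weight. Weakening any single hypothesis can reverse the sign of the cross-partial on positive-probability events in case B.
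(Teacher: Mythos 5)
Your proof follows essentially the same route as the paper's: differentiate the equilibrium first-order condition in $s$, partition the valuation space by the identity of the strongest rival, pair the positive own-cutoff term against the negative switch-in term from that rival via the monotonicity of $g(x)=f_i(x)+P_if_i'(x)$ and $\mu_i\approx\mu_{j^*}$, and close with the Implicit Function Theorem at a stable equilibrium (your case analysis just makes explicit what the paper asserts as ``sufficient''). One harmless slip: in case B the argmax rival's switch-in cutoff is $B_{j^*}=A_i+2s$, not $A_i+s$, since $A_i=P_i+v_{j^*}-P_{j^*}^*-s$ while $B_{j^*}=P_i+v_{j^*}-P_{j^*}^*+s$; your argument only uses $B_{j^*}\geq A_i$ together with $g'\geq 0$, so the conclusion is unaffected.
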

\begin{proof}
Impose the equilibrium condition $P_i=P_i^*$ in the FOC~(\ref{focpureoligopolyknownP}), 
take its derivative w.r.t.\ $s$ and split the integral into regions of $v$ based on which $\ell$ is $i$'s strongest competitor $\ell =\arg\max_{j\neq i}\set{v_{j}-P_{j}}$: 
\begin{align}
\label{dfocdsoligopolyknownP}
&\notag\frac{\partial FOC_i^*}{\partial s} 
=\sum_{\ell\neq i}\int_{{\set{v_{-i}:\ell=\arg\max_{j\neq i}\set{v_{j}-P_{j}}}}}\{\text{\textbf{1}}_{\set{v_{\ell}-P_{\ell}\geq s}}\mu_i[f_{i}\left(P_i +v_{\ell}-P_{\ell}-s\right) 
\\& +P_if_{i}'\left(P_i +v_{\ell}-P_{\ell}-s\right)]
\\&\notag -\mu_{\ell}\left[f_{i}\left(P_i+s+\max\set{0, v_{\ell}-P_{\ell}}\right) +P_if_i'\left(P_i+s+\max\set{0, v_{\ell}-P_{\ell}}\right)\right]
\\&\notag -\sum_{k\neq i,\ell}\mu_k\text{\textbf{1}}_{\set{v_{\ell}-P_{\ell} \leq\max\set{s,v_{k}-P_{k}+s}}}[f_{i}\left(P_i+s+\max\set{0, v_{k}-P_{k}}\right) 
\\&\notag +P_if_i'\left(P_i+s+\max\set{0, v_{k}-P_{k}}\right)]\}dF_{-i}(v_{-i}). 
\end{align}

Sufficient for $\frac{\partial FOC_i^*}{\partial s} <0$ is $\mu_i\approx \mu_{\ell}\;\forall i,\ell$ and $f_{i}\left(x+s\right) +P_if_i'\left(x+s\right) \geq0$ and 
$f_{i}'\left(z\right) +P_if_i''\left(z\right) \geq0$ for all $P_i\in(c_i,1)$, $x\in[P_i,1-s)$, $z\in[P_i,1)$, with at least one inequality strict. For this, $f_i',f_i'' \geq 0$ is sufficient. 
By the Implicit Function Theorem, 
\\$\left[\begin{array}{l} \frac{d P_{1}^*}{ds} \\ \ldots \\
\frac{d P_{n}^*}{ds}
\end{array}\right] =-\left[\begin{array}{l}
\frac{\partial FOC_{i}^*}{\partial P_{j}}
\end{array}\right]^{-1}\left[\begin{array}{l} \frac{\partial FOC_{1}^*}{\partial s} \\ \ldots \\
\frac{\partial FOC_{n}^*}{\partial s}
\end{array}\right]$. 
The matrix $\left[\frac{\partial FOC_{i}^*}{\partial P_{j}}\right]^{-1}$ is negative semidefinite iff the equilibrium is stable. 
Therefore sufficient for $\frac{dP_i^*}{ds}<0\;\forall i$ in any stable equilibrium is $\frac{\partial FOC_i^*}{\partial s} <0\;\forall i$. 

If no consumers learn, then $s$ does not affect prices, so $\frac{\partial FOC_i^*}{\partial s} =0\leq 0$. The condition $1-P_i^*-s>-P_j^*$ ensures that some consumers at $j$ learn about $i$, given the equilibrium prices. In a symmetric equilibrium $P_i^*=P_j^*$, so at any firm, some consumers learn if $s<1$. 
\end{proof}

Comparing~(\ref{dfocdsoligopolyknownP}) to the case of unknown prices, 
the terms $P_if_i'(P_i+s+\max\set{0, v_{m}-P_{m}})$ for $m=k,\ell$ are added. These may be positive or negative, so for either sign of $\frac{\partial FOC_i^*}{\partial s}$ under unknown prices, $\frac{\partial FOC_i^*}{\partial s}$ may have either sign under known prices. 

To interpret~(\ref{dfocdsoligopolyknownP}) in Theorem~\ref{thm:mainknownP}, suppose the strongest rival to firm $i$ is $\ell$. Then the term under the integral multiplied by $\text{\textbf{1}}_{\set{v_{\ell}-P_{\ell}\geq s}}$ describes the consumers initially at $i$ who stay at $i$ at a higher $s$ but would have left at a lower. The second term without an indicator function captures the customers at the strongest rival who would have switched to $i$ but at a higher $s$ do not. The same effect at less strong rivals is described by the last sum containing $\text{\textbf{1}}_{\set{v_{\ell}-P_{\ell} \leq\max\set{s,v_{k}-P_{k}+s}}}$, which is absent when there are only two firms. 
In each term, $f_i$ measures the decrease in inframarginal consumers (the $\frac{1}{n}-\frac{1}{n}F_i$ terms) in the FOC~(\ref{focpureoligopolyknownP}) when the argument of $F_i$ increases, and $f_i'$ measures the increase in marginal buyers, captured by the $f_i$ terms in~(\ref{focpureoligopolyknownP}). The argument of $F_i$ and $f_i$ decreases in $s$ for the customers initially at $i$, but for other consumers increases in $s$. For example, a greater $s$ raises the mass of inframarginal consumers staying at $i$, reduces the mass switching in, and if $f_i'>0$, then reduces the mass of marginal consumers among those initially at $i$ and increases this mass among those switching to $i$. If $f_i'<0$, then the effects on the marginal consumers are the opposite. 

If initial demands are similar and there are three or more firms, then the incoming switchers form most of a firm's demand and have a larger effect on its profit than its initial customers. This is also true with two firms because the search cost has a greater effect on the number of incoming switchers than the initial customers. Figure~\ref{fig:srisesknownP} illustrates how the switchers respond to the search cost on two margins: exit and which firm to buy from, but the initial customers respond just with the choice of firm. 

\begin{figure}[h!]
\caption{Demands after an increase in the search cost $s$ from $0.1$ to $0.2$ at $P_{\x}=0.6$, $P_{\y}=0.45$. Thick lines: marginal consumers.}
\label{fig:srisesknownP}
\includegraphics[width=\linewidth]{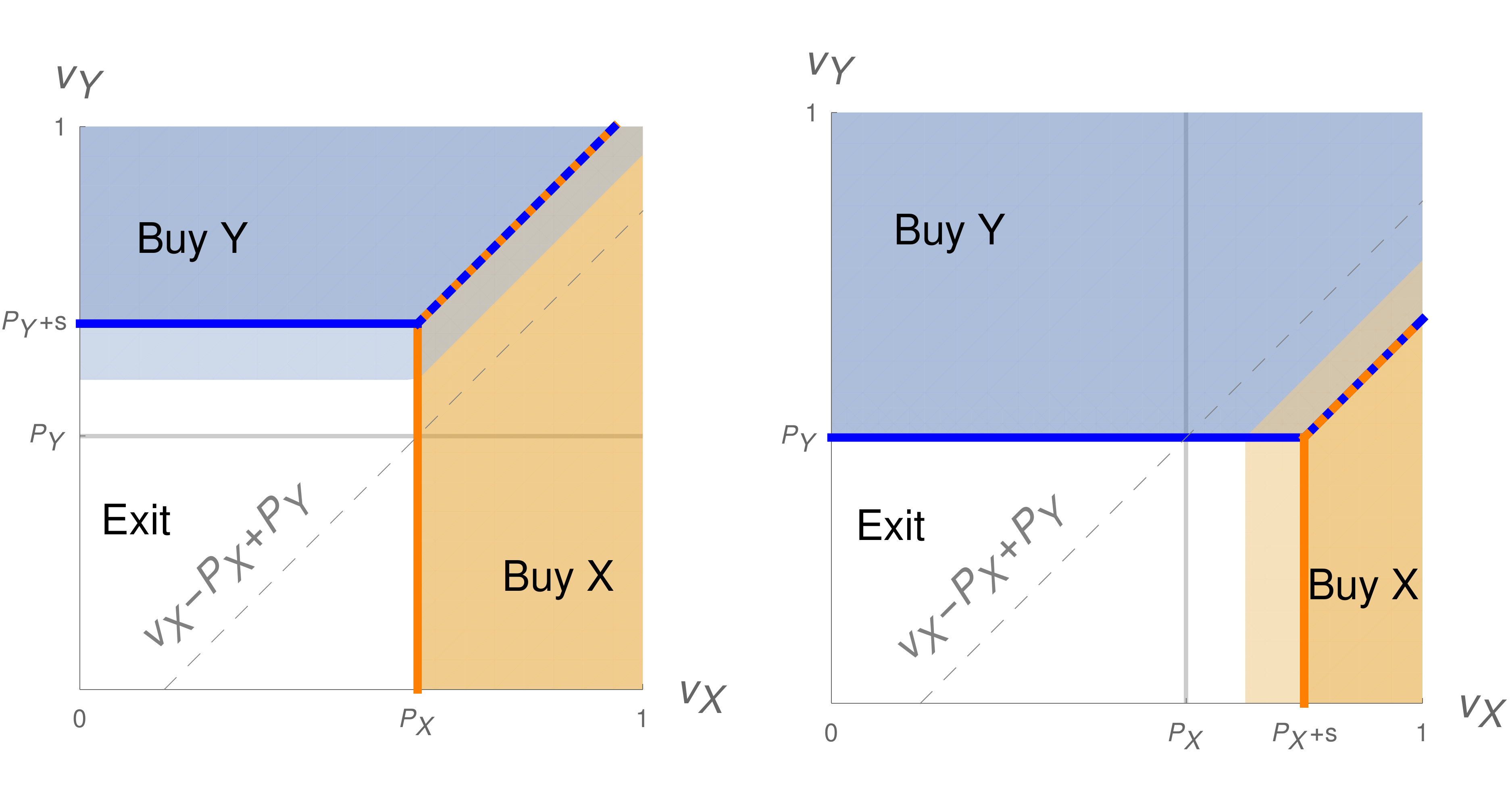} 
\end{figure}

The left panel of Figure~\ref{fig:srisesknownP} depicts demands for the two firms $\x$ and $\y$ from consumers initially at firm $\x$ and the right panel from those initially at $\y$. The thick blue line, including the dashed part, is the marginal consumers for firm $\y$ after the search cost rises, and the thick orange line, including dashed, the marginals for firm $\x$. The marginal consumers before the search cost increase are on the boundaries between the blue, orange and white areas. 

The change in inframarginal consumers for a given firm is clearly larger among the incoming switchers (in the other firm's panel) than among the initial customers (in the firm's own panel). For example, the initial customer change for firm $\x$ in the left panel only occurs where $v_{\y}\geq P_{\y}+s$ (reflected in~(\ref{dfocdsoligopolyknownP}) by multiplying with $\text{\textbf{1}}_{\set{v_{\ell}-P_{\ell}\geq s}}$), but in the right panel for $v_{\y}$ both above and below $P_{\y}$ (the last three lines in~(\ref{dfocdsoligopolyknownP})). The length of the marginal consumer line changes in opposite directions in the two panels, so the changes offset each other. If the pdf $f_i$ is increasing, then a shift of the marginal line away from the origin increases the mass of consumers on it. 
More marginal and fewer inframarginal consumers for each firm reduces its best response to any prices of the rivals. Thus equilibrium prices fall in the search cost if $f_i$ is increasing. 

In more detail, if $s$ increases by $\Delta s$ and the initial prices and valuation pdf-s are symmetric, then the mass of marginal consumers for firm $\y$ falls by $\mu_{\x}\sqrt{2}\Delta s$ in the left panel of  Figure~\ref{fig:srisesknownP} and by $\mu_{\y}(\sqrt{2}-1)\Delta s$ in the right. The absolute elasticity of marginal consumers is less than $[\mu_{\x}\sqrt{2} +\mu_{\y}(\sqrt{2}-1)]/1$. 
The inframarginal consumers for firm $\y$ decrease by at least $\mu_{\x}\Delta s$ and increase by at most $\mu_{\y}(1-P_{\x}-s)\sqrt{2}\Delta s$. If the equilibrium prices are above $\frac{1}{2}$, as in all numerical examples, then the absolute elasticity of inframarginal consumers for firm $\y$ is at least $[\mu_{\x}-\mu_{\y}(\frac{1}{2}-s)\sqrt{2}]/\frac{3}{8}$. With symmetric initial demands, the elasticity of marginal consumers is smaller than that of inframarginals. 

Suppose exit is not an option for the buyers and the firms are symmetric. Then after the search cost increases, the initial consumers who stay at a firm exactly offset the switchers who stop coming. The mass of inframarginal consumers remains constant. The marginal consumer line is parallel to the 45-degree line (no vertical component for firm $\x$ or horizontal for $\y$ in Figure~\ref{fig:srisesknownP}). In response to a higher search cost, the marginal line shifts parallel to itself away from the centre in each panel of Figure~\ref{fig:srisesknownP}. The marginal line thus shortens. If buyer valuations for the firms are independent or positively correlated and not too convex, then the mass of marginal consumers decreases. 
Therefore prices increase in the search cost---the opposite to what happens when buyers can exit. 
With an exit option, the inframarginal effect appears, which is a force toward prices decreasing in the search cost. The marginal effect under independent valuations starts to depend on the shape of the pdf of valuations. The effect of correlated valuations on the mass of marginal consumers is still present. 



Returning to the model where buyer exit is possible, if the search cost becomes so large that no consumers switch, then each firm's price falls to its monopoly level. 
This monopoly price is with respect to the remaining demand when low-valuation customers have exited and the rest buy from their initial firm even if their valuation is greater for a different firm---an inefficient allocation. 
Demand is smaller than at a lower switching cost, contains relatively more high-valuation customers overall, but each firm obtains relatively more buyers with a medium valuation for it and relatively fewer high-valuation buyers. 

The next result establishes that in a competitive enough market (in the limit of many similar firms), prices always decrease in the search cost. The reason is that most of the inframarginal consumers arrive from a firm's rivals, and many such buyers exit in response to a higher search cost. The mass of marginal consumers changes little. 
\begin{prop}
\label{prop:manyfirmsknownP}
If $(P_i-c_i)\frac{\partial f_i(P_i+w)}{\partial P_i}\geq -f_i(P_i+w)$ for all $P_i\in(c_i,1)$ and $P_i+w\in(c_i,1)$, then there exists $\underline{n}\in\mathbb{N}$ s.t.\ for any $n>\underline{n}$ symmetric firms, $\frac{dP_i^*}{ds} \leq 0$ for all firms in any stable equilibrium. If further $s<1-P_i^*$, then $\frac{dP_i^*}{ds} <0$. 
\end{prop}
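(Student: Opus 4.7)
The hypothesis here is exactly the strategic-complementarity condition of Lemma~\ref{lem:stratcomplknownP}, so $A(x):=f_i(x)+(P_i-c_i)f_i'(x)\ge 0$ wherever it is evaluated in~(\ref{dfocdsoligopolyknownP}). I would specialize~(\ref{dfocdsoligopolyknownP}) to symmetric firms ($\mu_j=1/n$, $P_j^*=P^*$, $f_j=f$, $c_j=c$); the integrand then decomposes into one nonnegative ``stayers at $i$'' piece carrying weight $\mu_i=1/n$ and two nonpositive ``missed incoming switcher'' pieces carrying total weight $(n-1)/n$, exactly the decomposition the paper describes after Theorem~\ref{thm:mainknownP} and illustrates in Figure~\ref{fig:srisesknownP}.

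The first bound is immediate: the positive piece is at most $(1/n)\sup_{x\in[P^*,1]}A(x)$, hence $O(1/n)$. The second step is to bound the absolute value of the negative contribution below by a constant $c_0>0$ that does not shrink with $n$. The third line of~(\ref{dfocdsoligopolyknownP}) supplies $(n-1)(n-2)$ ``missed switcher from weaker rival'' cross terms; on the event that two rival valuations $v_\ell,v_k$ both fall inside a subinterval of width $s$ strictly interior to $(P^*,1)$, $A$ is evaluated inside the interior support of $f$ and contributes strictly positive mass, and the hypothesis $s<1-P_i^*$ ensures such a witnessing event has positive probability. Once $\partial FOC_i^*/\partial s<0$ is established for every $n>\underline{n}$, the argument mirrors the tail of the proof of Theorem~\ref{thm:mainknownP}: the Implicit Function Theorem combined with stability of the equilibrium—which makes $[\partial FOC_i^*/\partial P_j]^{-1}$ negative semidefinite—turns the strict negativity of $\partial FOC_i^*/\partial s$ into $dP_i^*/ds<0$. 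Without the extra hypothesis $s<1-P_i^*$ the switching cutoff already exceeds $1$, so $\partial FOC_i^*/\partial s=0$ and the conclusion degenerates to the weak inequality $dP_i^*/ds\le 0$.

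The main obstacle is the uniform-in-$n$ lower bound on the negative contribution. As $n$ grows, the maximum rival surplus $M=\max_{j\neq i}(v_j-P^*)$ concentrates near $1-P^*$, pushing several cutoffs in~(\ref{dfocdsoligopolyknownP}) past $1$, where $f$ vanishes; in particular the single $-\mu_\ell A(\cdot)$ piece associated with the strongest rival may contribute nothing. The fix is to avoid conditioning on the argmax and instead anchor the witnessing event on a pair of non-extreme rival valuations lying in a fixed interval $[P^*+\delta,\,1-s-\delta]$, on which $f$ is uniformly bounded away from $0$. A secondary subtlety is that the symmetric equilibrium price $P^*$ itself varies with $n$—it decreases in $n$ by Lemma~\ref{lem:stratcomplknownP}—so the witnessing interval and the constant $c_0$ must be chosen uniformly in the relevant range of $P^*$.
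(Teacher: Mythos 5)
Your overall architecture---decompose $\partial FOC_i^*/\partial s$ in~(\ref{dfocdsoligopolyknownP}) into the $\mu_i$ ``stayer'' piece and the ``missed switcher'' pieces, argue the former is negligible and the latter dominant for large $n$, then finish with the Implicit Function Theorem plus stability and the same treatment of the weak-versus-strict conclusion---is the paper's own route. But your central quantitative step fails. You claim the positive piece is $O(1/n)$ while the negative contribution is bounded below in absolute value by an $n$-independent constant $c_0>0$. The second half of that dichotomy is false. The cross terms carry total weight $\sum_{k\neq i,\ell}\mu_k\le 1$, and, more importantly, the indicator $\text{\textbf{1}}_{\set{v_{\ell}-P_{\ell} \leq\max\set{s,v_{k}-P_{k}+s}}}$ ties each $k$-term to the \emph{argmax} rival $\ell$: a weaker rival $k$ contributes only when $v_k-P_k$ lies within $s$ of the maximum rival surplus \emph{and} $P_i+s+v_k-P_k<1$ so that $f_i$ is evaluated inside its support. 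As $n\to\infty$ the maximum concentrates at the top of the support, so the admissible window for $v_k$ has length of order $1-\max_{j\neq i}v_j=O_p(1/n)$, and each cross term is additionally weighted by $\mu_k=1/n$. Your proposed fix---anchoring the witnessing event on a pair of non-extreme valuations in $[P^*+\delta,1-s-\delta]$---is incompatible with the integrand: if $v_\ell$ is non-extreme then $\ell$ is not the argmax and that term of~(\ref{dfocdsoligopolyknownP}) is simply absent; if $\ell$ is the argmax, it is extreme with probability tending to one.

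A closed-form example shows how delicate the sign actually is. With $n$ symmetric firms, uniform valuations, $c_i=0$ and common price $P$ with $P+s<1$, the three pieces of~(\ref{dfocdsoligopolyknownP}) evaluate exactly to $T_1=\frac{1}{n}\bigl(1-(P+s)^{n-1}\bigr)$, $T_2=\frac{1}{n}(1-s)^{n-1}$ and $T_3=\frac{n-1}{n}P(P+s)^{n-2}+\frac{1}{n}\bigl(1-(P+s)^{n-1}-(1-s)^{n-1}\bigr)$, whence $\partial FOC_i^*/\partial s=T_1-T_2-T_3=-\frac{n-1}{n}P(P+s)^{n-2}$. This is strictly negative, so the proposition's conclusion is consistent, but it vanishes exponentially in $n$: the $\Theta(1/n)$ leading terms of your ``positive'' and ``negative'' pieces cancel exactly, and the sign is decided by a residual that no order-of-magnitude comparison can detect. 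Any correct proof must show the negative piece exceeds the positive one \emph{at the same order}, which neither your $c_0$ bound nor your $O(1/n)$ bound accomplishes. (For what it is worth, the paper's own proof asserts that the $\sum_{k\neq i,\ell}$ term diverges as $n\to\infty$, which the computation above also contradicts once the weights $\mu_k=1/n$ are accounted for; so you have reproduced the paper's strategy together with its weak point rather than repaired it.) The concluding IFT-plus-stability step and the degenerate case $s\ge 1-P_i^*$ are handled correctly and match the paper.
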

\begin{proof}
Denote by $-i,\ell$ all the firms except $i$ and $\ell$. Rewrite~(\ref{dfocdsoligopolyknownP}) as 
\begin{align}
\label{manyfirms}
&\notag\frac{\partial FOC_i^*}{\partial s} 
=\sum_{\ell\neq i}\int_0^1\{\text{\textbf{1}}_{\set{v_{\ell}-P_{\ell}\geq s}}\mu_i[f_{i}\left(P_i +v_{\ell}-P_{\ell}-s\right) 
+(P_i-c_i)f_{i}'\left(P_i +v_{\ell}-P_{\ell}-s\right)]
\\&\notag -\mu_{\ell}\left[f_{i}\left(P_i+s+\max\set{0, v_{\ell}-P_{\ell}}\right) +(P_i-c_i)f_i'\left(P_i+s+\max\set{0, v_{\ell}-P_{\ell}}\right)\right]
\\&\notag -\sum_{k\neq i,\ell}\mu_k\int_{\set{v_{-i,\ell}:\ell=\arg\max_{j\neq i}\set{v_{j}-P_{j}}}}\text{\textbf{1}}_{\set{v_{\ell}-P_{\ell} \leq\max\set{s,v_{k}-P_{k}+s}}}[f_{i}\left(P_i+s+\max\set{0, v_{k}-P_{k}}\right) 
\\& +(P_i-c_i)f_i'\left(P_i+s+\max\set{0, v_{k}-P_{k}}\right)]dF_{-i,\ell}(v_{-i,\ell})\}dF_{\ell}(v_{\ell}). 
\end{align}
The $\sum_{k\neq i,\ell}$ term in~(\ref{manyfirms}) simplifies to $\sum_{k\neq i,\ell}\mu_k\int_{0}^1\text{\textbf{1}}_{\set{v_{\ell}-P_{\ell} \leq\max\set{s,v_{k}-P_{k}+s}}}[f_{i}(P_i+s+\max\set{0, v_{k}-P_{k}}) +(P_i-c_i)f_i'(P_i+s+\max\set{0, v_{k}-P_{k}})]dF_{k}(v_{k})$. 
For any $n,s$, bound $\sum_{k\neq i,\ell}$ in~(\ref{manyfirms}) below by 
$
\sum_{k\neq i,\ell}\mu_k\int_{P_k+v_{\ell}-P_{\ell}-s}^1\text{\textbf{1}}_{\set{ v_{k}\geq P_{k}}}[f_{i}\left(P_i+s+v_{k}-P_{k}\right) +(P_i-c_i)f_i'\left(P_i+s+v_{k}-P_{k}\right)]dF_{k}(v_{k}).
$
The bracketed term is positive if $(P_i-c_i)\frac{\partial f_i(P_i+w)}{\partial P_i}\geq -f_i(P_i+w)$. 

For any $P_k<1$, as $n\rightarrow\infty$, arbitrarily many $k$ satisfy $v_{k}> P_{k}$. If the equilibrium prices are similar enough ($P_k\approx P_{\ell}$, ensured by symmetric firms), then as $n\rightarrow\infty$, arbitrarily many $k$ satisfy $P_k+v_{\ell}-P_{\ell}-s <1$, in which case the inner integral in~(\ref{manyfirms}) is positive. Therefore the limit of $\sum_{k\neq i,\ell}$ as $n\rightarrow\infty$ is infinite and~(\ref{manyfirms}) is negative. Using the Implicit Function Theorem, as in Theorem~\ref{thm:mainknownP}, completes the proof. 
\end{proof}

Proposition~\ref{prop:manyfirmsknownP} shows that the main comparative statics in Theorem~\ref{thm:mainknownP} are more likely to occur with more firms. This is despite equilibrium prices decreasing in the number of firms, which leaves less scope for cutting price in response to a greater switching cost. 

The next section discusses extensions and modifications of the main model studied above. The results remain robust when the firms can distinguish the incoming switchers from the initial customers.

\section{Extensions and discussion}
\label{sec:extensions}

\textbf{Price discrimination.} 
If the firms could price discriminate between consumers initially arriving at them and those switching in, then the results before the main theorem remain similar. Each firm sets two prices, so the FOC~(\ref{focknownP}) becomes two equations, one composed of the first and third line, which determines the firm's price for its initial consumers, and another composed of the second and fourth line, determining the price for switchers. The price for switchers is lower if $f_i'\geq0$, because $\inswitchvcutknownP \geq P_i +\max\set{0,\max_{j\neq i}\set{v_{j}-P_{j}^*-s}}$. This is in contrast to the case where the switchers do not know the price before switching---then, the firm would hold up the switchers with a high price. 

The SOC~(\ref{focknownP}) and subsequent equations before~(\ref{dfocdsoligopolyknownP}) are similarly split in two: one for the initial customers and another for the switchers. Each equation has the same sign as before the split, so the conclusions are the same. 

If the switching cost $s$ rises and $(P_i-c_i)\frac{\partial f_i(P_i+w)}{\partial P_i}\geq -f_i(P_i+w)$, then the price increases for the initial customers and decreases for the incoming switchers, as is apparent from~(\ref{dfocdsoligopolyknownP}). The intuition is that more can be extracted from the initial customers when they face greater difficulty leaving. The switchers, by contrast, need to be attracted with a lower price to overcome their cost of changing firms.

\textbf{Discussion.}
Higher switching costs reduce prices even in a one-shot one-product market without initial advertising by the sellers to attract consumers to direct their search. Sufficient for a costlier change of seller to be pro-competitive is an increasing convex density of consumer valuations, but this is not necessary. The result holds for any positive level of switching cost at which some consumers still switch.

The results are the more surprising because low-valuation consumers exit the market in response to a greater switching cost. This actually turns out to be the mechanism for the reduction in prices. Exit decreases the mass of inframarginal consumers, but has less of an effect on the marginal buyers and may even increase them. The ratio of inframarginal to marginal consumers falls, so the best response of each seller decreases. 

The more sellers there are in the market, the easier these comparative statics are to obtain, because with more firms, more inframarginal consumers arrive from rival sellers. These incoming switchers are the ones a firm would lose at a greater switching cost, thus the motive to cut price.

\bibliographystyle{ecta}
\bibliography{teooriaPaberid} 

\begin{thebibliography}{15}
\newcommand{\enquote}[1]{``#1''}
\expandafter\ifx\csname natexlab\endcsname\relax\def\natexlab#1{#1}\fi

\bibitem[\protect\citeauthoryear{Armstrong and Zhou}{Armstrong and
  Zhou}{2011}]{armstrong+zhou2011}
\textsc{Armstrong, M. and J.~Zhou} (2011): \enquote{Paying for prominence,}
  \emph{The Economic Journal}, 121, 368--395.

\bibitem[\protect\citeauthoryear{Cabral}{Cabral}{2009}]{cabral2009}
\textsc{Cabral, L.} (2009): \enquote{Small switching costs lead to lower
  prices,} \emph{Journal of Marketing Research}, 46, 449--451.

\bibitem[\protect\citeauthoryear{Cabral}{Cabral}{2016}]{cabral2016}
---\hspace{-.1pt}---\hspace{-.1pt}--- (2016): \enquote{Dynamic pricing in
  customer markets with switching costs,} \emph{Review of Economic Dynamics},
  20, 43--62.

\bibitem[\protect\citeauthoryear{Choi, Dai, and Kim}{Choi
  et~al.}{2018}]{choi+2018}
\textsc{Choi, M., A.~Y. Dai, and K.~Kim} (2018): \enquote{Consumer search and
  price competition,} \emph{Econometrica}, 86, 1257--1281.

\bibitem[\protect\citeauthoryear{Diamond}{Diamond}{1971}]{diamond1971}
\textsc{Diamond, P.~A.} (1971): \enquote{A model of price adjustment,}
  \emph{Journal of Economic Theory}, 3, 156--168.

\bibitem[\protect\citeauthoryear{Dub{\'e}, Hitsch, and Rossi}{Dub{\'e}
  et~al.}{2009}]{dube+2009}
\textsc{Dub{\'e}, J.-P., G.~J. Hitsch, and P.~E. Rossi} (2009): \enquote{Do
  switching costs make markets less competitive?} \emph{Journal of Marketing
  research}, 46, 435--445.

\bibitem[\protect\citeauthoryear{Haan, Moraga-Gonz{\'a}lez, and
  Petrikait{\.e}}{Haan et~al.}{2018}]{haan+2018}
\textsc{Haan, M.~A., J.~L. Moraga-Gonz{\'a}lez, and V.~Petrikait{\.e}} (2018):
  \enquote{A model of directed consumer search,} \emph{International Journal of
  Industrial Organization}, 61, 223--255.

\bibitem[\protect\citeauthoryear{Heinsalu}{Heinsalu}{2020}]{heinsalu2020searchcost}
\textsc{Heinsalu, S.} (2020): \enquote{Greater search cost reduces prices,}
  Working paper.

\bibitem[\protect\citeauthoryear{Klemperer}{Klemperer}{1987}]{klemperer1987}
\textsc{Klemperer, P.} (1987): \enquote{Markets with consumer switching costs,}
  \emph{Quarterly Journal of Economics}, 102, 375--394.

\bibitem[\protect\citeauthoryear{Lal and Sarvary}{Lal and
  Sarvary}{1999}]{lal+sarvary1999}
\textsc{Lal, R. and M.~Sarvary} (1999): \enquote{When and how is the Internet
  likely to decrease price competition?} \emph{Marketing Science}, 18,
  485--503.

\bibitem[\protect\citeauthoryear{Mauring}{Mauring}{2021}]{mauring2021}
\textsc{Mauring, E.} (2021): \enquote{Search and Price Discrimination Online,}
  Discussion Paper DP15729.

\bibitem[\protect\citeauthoryear{Murry and Zhou}{Murry and
  Zhou}{2020}]{murry+zhou2020}
\textsc{Murry, C. and Y.~Zhou} (2020): \enquote{Consumer Search and Automobile
  Dealer Colocation,} \emph{Management Science}, 66, 1909--1934.

\bibitem[\protect\citeauthoryear{Schultz}{Schultz}{2005}]{schultz2005}
\textsc{Schultz, C.} (2005): \enquote{Transparency on the consumer side and
  tacit collusion,} \emph{European Economic Review}, 49, 279--297.

\bibitem[\protect\citeauthoryear{Vitorino}{Vitorino}{2012}]{vitorino2012}
\textsc{Vitorino, M.~A.} (2012): \enquote{Empirical Entry Games with
  Complementarities: An Application to the Shopping Center Industry,}
  \emph{Journal of Marketing Research}, 49, 175--191.

\bibitem[\protect\citeauthoryear{Wolinsky}{Wolinsky}{1986}]{wolinsky1986}
\textsc{Wolinsky, A.} (1986): \enquote{True monopolistic competition as a
  result of imperfect information,} \emph{The Quarterly Journal of Economics},
  101, 493--511.

\end{thebibliography}
\end{document}